\newcommand{\e}{\mathrm{e}}
\renewcommand{\i}{\mathrm{i}}
\newcommand{\eqdef}{\overset{\text{def.}}{=}}
\renewcommand{\d}{\mathrm{d}}
\newcommand{\sinc}[1]{\mathrm{sinc}\left(#1\right)}
\newcommand{\h}{\hat{H}}
\newcommand{\hint}{\hat{H}_{\text{int}}}
\renewcommand{\a}{\hat{a}}
\newcommand{\ad}{\hat{a}^{\dagger}}
\newcommand{\aw}{\hat{a}_\omega}
\newcommand{\awd}{\hat{a}_\omega^{\dagger}}
\newcommand{\bn}{\hat{b}_n}
\newcommand{\bnd}{\hat{b}_n^{\dagger}}
\newcommand{\hc}{\text{h.c.}}
\newcommand{\U}{\hat{\mathcal{U}}}
\renewcommand{\eqref}[1]{Eq.~(\ref{#1})}
\newcommand{\R}{\mathbb{R}}
\newtheorem{theorem}{Theorem}
\newtheorem{corollary}{Corollary}[theorem]
\newtheorem{lemma}[theorem]{Lemma}
\newtheorem*{remark}{Remark}
\begin{document}

\title{Making Quantum Collision Models Exact}

\author{Thibaut Lacroix}
\email{thibaut.lacroix@uni-ulm.de}
\affiliation{
Institut f\"ur Theoretische Physik \& IQST, Albert-Einstein-Allee 11, Universit\"at Ulm, D-89081 Ulm, Germany
}

\author{Dario Cilluffo}
\email{dario.cillufo@uni-ulm.de}
\affiliation{
Institut f\"ur Theoretische Physik \& IQST, Albert-Einstein-Allee 11, Universit\"at Ulm, D-89081 Ulm, Germany
}

\author{Susana F. Huelga}
\email{susana.huelga@uni-ulm.de}
\affiliation{
Institut f\"ur Theoretische Physik \& IQST, Albert-Einstein-Allee 11, Universit\"at Ulm, D-89081 Ulm, Germany
}
\author{Martin B. Plenio}
\email{martin.plenio@uni-ulm.de}
\affiliation{
Institut f\"ur Theoretische Physik \& IQST, Albert-Einstein-Allee 11, Universit\"at Ulm, D-89081 Ulm, Germany
}

\date{\today}

\begin{abstract}
Quantum collision describe open quantum systems through repeated interactions with a coarse-grained environment.
However, a complete certification of these models is lacking, as no complete error bounds on the simulation of system observables have been established.
Here, we show that Markovian and non-Markovian collision models can be recovered analytically from chain mapping techniques starting from a general microscopic Hamiltonian.
This derivation reveals a previously unidentified source of error -- induced by an unfaithful sampling of the environment -- in dynamics obtained with collision models that can become dominant for small but finite time-steps.
With the complete characterization of this error, all collision models errors are now identified and quantified, which enables the promotion of collision models to the class of numerically exact methods.
To confirm the predictions of our equivalence results, we implemented a non-Markovian collision model of the Spin Boson Model, and identified, as predicted, a regime in which the collision model is fundamentally inaccurate.
\end{abstract}

\maketitle

\section{Introduction}

Quantum collision models offer an intuitive and versatile framework for describing open quantum systems. 
Since their initial formulation in Ref.~\cite{Rau}, these models have become prominent in areas such as weak measurement theory~ \cite{Brun,WisemanMilburnBook}, quantum thermodynamics~\cite{EspositoThermoPRX,Rodrigues_2019}, and quantum optics~\cite{Ciccarello}. 
Applications range from micromaser emission theory~\cite{PhysRevA.34.3077,Filipowicz:86} to waveguide quantum electrodynamics~\cite{Cilluffo_2020}. 
The central idea in collision models is that the system of interest interacts sequentially (``collides") with a set of ancillae representing the environmental degrees of freedom. 
Depending on whether these ancillae are independent and continuously refreshed or correlated and recycled, collision models can capture both Markovian~\cite{Brun,ziman_diluting_2002,scarani_thermalizing_2002, ziman_description_2005, ziman_open_2010, cattaneo_collision_2021} and non-Markovian dynamics~\cite{pellegrini_non-markovian_2009, bodor_structural_2013, CPGCollision, 10.1007/978-3-030-31146-9_3, ciccarello_quantum_2022}. 
However, as with master equations used to model open quantum systems, rigorously benchmarking the accuracy of predictions remains challenging, as it requires rigorously bounding numerical errors as a function of convergence parameters.

In this work, we demonstrate that quantum collision models can enter the domain of numerically exact techniques. 
Specifically, we show that both Markovian and non-Markovian collision models can be derived through chain mapping techniques -- a class of numerically exact methods~\cite{chin_exact_2010, woods_simulating_2015, mascherpa_open_2017}. 
This analytical connection provides a guideline for determining the appropriate coarse-graining timescale for collision models and reveals a unique spectral density sampling error in non-Markovian models, which can exceed other error sources affecting system dynamics. 

Our findings reveal that non-Markovian collision models remain valid and accurate only when the coarse-graining time step between collisions, $\Delta t$, is chosen such that $\Delta t < \frac{\pi}{\omega_c}$ where $\omega_c$ represents the cutoff angular frequency of the bath spectral density. 
This foundational result enhances simulation accuracy in open quantum systems, offering both the collision model and chain mapping communities new insights into the limitations and potential of these techniques.
With all error sources now characterized through equivalence with chain mapping, collision models are elevated to numerically exact methods.

The remainder of the paper is organized as follows: in Sec.~\ref{sec:methods}, we summarize the primary aspects of both methods and establish notation. 
In Sec.~\ref{sec:non-Markovian}, we demonstrate that collision models can recover non-Markovian dynamics using chain mapping. 
Sec.~\ref{sec:markovian} extends this equivalence to the Markovian regime. 
In Sec.~\ref{sec:application}, we apply these results to (i) identify a new error source in non-Markovian collision models and (ii) validate our predictions using the Spin Boson Model (SBM). We conclude with a discussion of the implications of this equivalence in Sec.~\ref{sec:conclusion}.

\section{Overview of the two methods\label{sec:methods}}
We consider, in the Schr\"odinger picture, a general Hamiltonian where a non-specified system interacts linearly with a bosonic environment
\begin{align}
    \h =& \h_S + \int_0^{\infty} \d\omega \, \hbar\omega\ad_\omega\a_\omega \nonumber\\
    & + \hat{A}_S\int_0^{\infty} \d\omega  \sqrt{J(\omega)}\left(\a_\omega + \ad_\omega\right)\label{eq:vibronic}
\end{align}
where $\a_\omega\ (\ad_\omega)$ is a bosonic annihilation (creation) operator for a normal mode of the environment with angular frequency $\omega$, $\hat{A}_S$ is a system operator, and $J(\omega)$ is the bath spectral density (SD) and encodes the coupling strength between the system and the bath modes.
There exist several definition of non-Markovianity~\cite{breuer_colloquium_2016, rivas_quantum_2014,li_concepts_2018, pollock_operational_2018}. 
In this work, we adopt the perspective commonly used in quantum optics, where any spectral density (SD) that is not flat is considered indicative of a non-Markovian environment.

    \subsection{Collision models\label{sec:collision}}

The fundamental concept behind quantum collision models (CMs) is the characterization of the interaction between a quantum system $S$ and its environment (or bath) $E$ as arising from repeated interactions with auxiliary systems, referred to as probes (or ancillae), which collectively represent the environment and share the same initial state $\eta$.
The system evolves through a sequence of pairwise interactions with each probe, which we call \textit{collisions}.
A Markovian CM is defined by the following properties:
\begin{enumerate}[label=C\arabic*]
\item the probes are uncorrelated, e.g.~the initial state of the bath is $(\eta \otimes \eta \otimes...)$;\label{cond:1}
\item probes do not interact with each other;\label{cond:2}
\item each probe is discarded after the interaction with the system and is replaced with a fresh one before the next collision. \label{cond:3}
\end{enumerate}
Additionally, we require that system and environment are uncorrelated at the initial time:
\begin{align}
\sigma_0 = \rho_0 \otimes (\eta \otimes \eta \otimes...)\,,
\end{align}
where subscript $0$ indicates the initial time, $\sigma$ the joint system-environment state and $\rho_0$ is the initial state of $S$. 
The conditions \ref{cond:1}--\ref{cond:3} are fully consistent with the second-order perturbation theory derivation of the Markovian master equation for a discrete dynamics.
Within these assumptions the dynamics of $S$ is decomposable into a sequence of elementary completely-positive maps and thus its temporal evolution can be effectively described through a Master Equation in Lindblad form in the continuous-time limit \cite{ciccarello_quantum_2022, ziman_open_2010}.
When one or more of the aforementioned assumptions is violated, this is no longer possible. This is often interpreted as the introduction of memory effects into the time evolution of the system.
In a general context, describing the dynamics of an open system through collisions necessitates the proper treatment of the Hamiltonian governing the interaction between the system and its surrounding environment. This involves deriving the discretized system-environment coupling Hamiltonian from a microscopic model that accounts for the interactions between the system and the bath.
Starting from the general model in \eqref{eq:vibronic} we can move to the interaction picture with respect to the bath Hamiltonian
\begin{align}
    \h^I(t) &= \h_S + g \hat{A}_S\int_{0}^{\infty} \d\omega \sqrt{J(\omega)}\left(\aw\e^{-\i\omega t} + \awd\e^{\i\omega t}\right)\, ,
    \label{eq:int_pic_0}
\end{align}
where we have scaled the SD with a coupling strength $g \in \mathbb{R}$ for later convenience,
and define the time-domain ladder operators
\begin{align}
    \a(t) &= \frac{1}{\sqrt{2 \pi}}  \int_{0}^{\infty} \d\omega \aw\e^{-\i\omega t} \,.
    \label{eq:time_ops}
\end{align}
It's important to highlight that in what follows we will deliberately avoid moving to the interaction picture with respect to the system's Hamiltonian and refrain from introducing the rotating wave approximation (RWA). 
While we acknowledge that these two approximations play a critical role in establishing a self-consistent definition of Markovian collision models \cite{CPGCollision}, we have chosen to maintain a more general model for the purpose of comprehensive comparison with chain mapping.

In terms of the time-domain operators, the final discrete-time generator of the joint system-bath dynamics reads (see Appendix \ref{sec:collis} for details)
\begin{align}
    \h^I_n = &\h_S + \nonumber\\
    \frac{\hat{A}_S}{\Delta t} &\int_{t_{n-1}}^{t_{n}}\!\!\!\!\!\!\d t \sum_m \int_{t_{m-1}}^{t_{m}}\!\!\!\!\!\!\d t'  g\left(\mathcal{F}[\sqrt{J}](t-t') ~ \hat{a}(t')+ \hc\right) \ ,
    \label{eq:coll_int_intermediate}
\end{align}
with the Fourier transform of the spectral density defined as
\begin{align}
    \mathcal{F}[\sqrt{J}](t-t') &= \frac{1}{\sqrt{2\pi}} \int_{-\infty}^{\infty} \!\!\!\!\!\! \d\omega \sqrt{J(\omega)} \e^{-\i\omega (t-t')} \ .
\end{align}

If we now replace the Fourier transform of the SD with its average over $\Delta t$ we find
\begin{align}
    \h^I_n & \simeq \h_S + \hat{A}_S \sum_{m} (W_{nm} \a_m + \hc) \,,
   \label{eq:c_model}
\end{align}
with 
\begin{align}
W_{nm} &= \frac{1}{\Delta t} \int_{t_{m-1}}^{t_{m}}\!\!\! \d t' \int_{t_{n-1}}^{t_{n}} \d t \, g  \mathcal{F}[\sqrt{J/\Delta t}](t-t')\,\label{eq:kern},
\\
\a_m &= \frac{1}{\sqrt{\Delta t}} \int_{t_{m-1}}^{t_{m}}\!\!\!\!\!\!\d t' ~ \hat{a}(t')\,.\label{eq:ancilla}
\end{align}
The Equations \eqref{eq:c_model}, \eqref{eq:kern} and \eqref{eq:ancilla} collectively define the effective quantum collision model describing our dynamics: the system interacts with a set of time-bin modes defined by the ladder operators $(\a_m, \a^\dag_m)$, which act as the ancillae.
Note that, according to \eqref{eq:c_model}, the system couples nonlocally to all the ancillae with coupling rate $W_{nm}$.
Figure~\ref{fig:schematics-models}~(b) shows a schematic drawing of collision models.
We retrieve the condition \ref{cond:3} if, after performing the RWA \cite{CPGCollision} in the interaction picture with respect to the system's Hamiltonian, we put $\mathcal{F}[\sqrt{J}](s-t') = \delta(s-t')$ that directly implies $W_{nm} = \delta_{nm}$ making the system only interacts with a single ancilla at once. Note that in the frequency space this corresponds to a perfectly flat coupling. Conversely, in the other cases we are describing a system interacting with a colored-noise bosonic reservoir \cite{10.1007/978-3-030-31146-9_3}.

\begin{figure}
    \centering
    \includegraphics[width=0.5\textwidth]{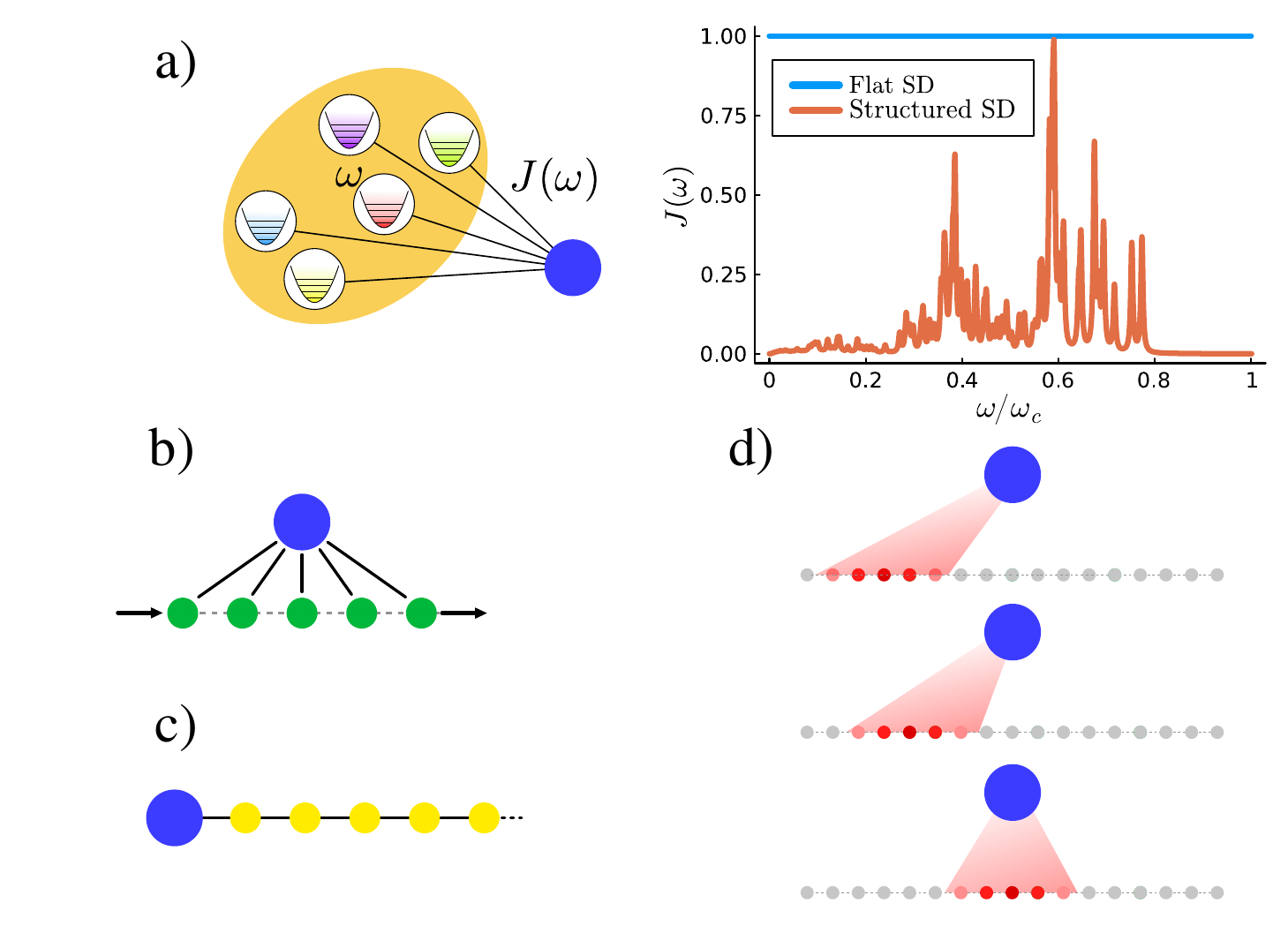}
    \caption{a) A quantum system (blue disk) is interacting with an environment made of a continuum of \emph{non-interacting} bosonic modes of angular frequencies $\omega$. The strength of the interaction between the system and a given mode is encoded in the bath spectral density (SD) $J(\omega)$. Markovian baths are described by a flat (i.e.~constant) SD. A non-flat, i.e.~structured, SD generally induces a non-Markovian dynamics. b) Collision models construct \emph{non-interacting} bosonic temporal-modes on a coarse-grained timescale that experience a finite number of interactions (collisions) with the system before being discarded (refreshed). c) The chain mapping technique maps the bosonic environment into a non-uniform semi-infinite chain of \emph{interacting} bosonic modes such that the system only couples to the first mode of the chain. d) Chain mapping can be reformulated to make the modes \emph{non-interacting} and coupling sequentially to the system for a finite amount of time. This reformulation is equivalent to collision models.}
    \label{fig:schematics-models}
\end{figure}

    \subsection{Chain mapping\label{sec:chainmapping}}

Let us consider the Hamiltonian presented in Eq.~(\ref{eq:vibronic}).
We can introduce a unitary transformation of the continuous normal modes $\aw$ to an infinite discrete set of interacting modes $\bn$~\cite{chin_exact_2010}
\begin{align}
    \aw &= \sum_{n=0}^{\infty} U_n(\omega)\bn = \sum_{n=0}^{\infty} \sqrt{J(\omega)}P_n(\omega)\bn\ , \label{eq:chain-mapping}
\end{align}
where $P_n(\omega)$ are real orthonormal polynomials such that
\begin{align}
    \int_{0}^{\infty} \d\omega \,P_n(\omega)P_m(\omega)J(\omega) = \delta_{n,m}\ ;
\end{align}
and the inverse transformation is
\begin{align}
    \bn &= \int_0^{\infty} \d\omega \, U_n(\omega)\aw\ . \label{eq:inverse-chain-mapping}
\end{align}
Note that the orthonormality of the polynomials ensures the unitarity of the transformation defined in Eq.~(\ref{eq:chain-mapping}). The mapping from a continuous set of modes to a (still infinite) discrete set might seem counter-intuitive, however it is a direct consequence of the separability of the underlying Hilbert space.
Under this transformation, the Hamiltonian in Eq.~(\ref{eq:vibronic}) becomes (see Appendix~\ref{sec:orthopol})
\begin{align}
    \h = \h_S &+ \sum_{n=0}^{\infty}\varepsilon_n\bnd\bn + t_n(\hat{b}_{n+1}^\dagger\bn + \hc)\nonumber\\
    &+ \kappa\hat{A}_S(\hat{b}_0 + \hat{b}_0^\dagger)\ . \label{eq:chain-Hamiltonian}
\end{align}

Hence, this mapping transforms the normal bath Hamiltonian into a tight-binding Hamiltonian with on-site energies $\varepsilon_n$ and hopping energies $t_n$.
Another important consequence of this mapping is that now the system only interacts with the first mode $n=0$ of the chain-mapped environment.
Figure~\ref{fig:schematics-models}(c) shows a schematic drawing of this new topology.
The chain coefficients $\varepsilon_n$, $t_n$, and the coupling $\kappa$ depend solely on the SD (see Appendix~\ref{sec:orthopol}).
This makes chain mapping a tool of choice for describing systems coupled to an environment with highly structured SD (e.g. experimentally measured or calculated \textit{ab initio})~\cite{chin_role_2013, alvertis_non-equilibrium_2019, dunnett_influence_2021, caycedo-soler_exact_2022}.
In this new representation, the Hamiltonian in Eq.~(\ref{eq:chain-Hamiltonian}) has naturally a 1D chain topology.
This makes the representation of the joint \{System + Environment\} wave-function as a Matrix Product State (MPS) very efficient~\cite{orus_practical_2014, paeckel_time-evolution_2019}.
The orthogonal polynomial-based chain mapping and the subsequent representation of the joint wave-function as a MPS (and the operators as Matrix Product Operators) are the building blocks of the Time-Evolving Density operator with Orthonormal Polynomials Algorithm (TEDOPA) one of the state-of-the-art numerically exact method to simulate the dynamics of open quantum systems especially in the non-Markovian, non-perturbative regimes both at zero and finite temperatures~\cite{prior_efficient_2010, woods_simulating_2015, tamascelli_efficient_2019, dunnett_simulating_2021, lacroix_unveiling_2021}.
TEDOPA has been applied, for instance, to transport of electronic excitations in the presence of structured vibrational environment~\cite{prior_efficient_2010}, photonic crystals~\cite{prior_quantum_2013}, non-equilibrium steady states~\cite{dunnett_matrix_2021}, molecular systems~\cite{oviedo-casado_phase-dependent_2015, alvertis_non-equilibrium_2019, le_de_extending_2024}, vibration-induced coherence~\cite{chin_role_2013}, or the calculation of absorption spectra of chromophores~\cite{dunnett_influence_2021, hunter_environmentally_2024, lambertson_computing_2024} and pigment-protein complexes~\cite{caycedo-soler_exact_2022, lorenzoni_systematic_2024}.\\

Here we adopt a slightly different starting point and implement the chain mapping introduced in \eqref{eq:chain-mapping} after moving to the interaction picture with respect to the bath Hamiltonian, the Hamiltonian in \eqref{eq:int_pic_0} reads
\begin{align}
    \h^{I}(t) &= \h_S + \hat{A}_S\sum_{n=0}^{\infty} \left(\gamma_{n}(t)\bn + \gamma_{n}^{*}(t)\bnd\right)\ ,
    \label{eq:simp_ham_chain}
\end{align}
where the $\bn$ operators are the discrete chain modes defined in \eqref{eq:chain-mapping} and the time-dependent coupling coefficients are
\begin{align}
    \gamma_n(t) &= g\int_0^{\infty} \d\omega P_n(\omega)\e^{-\i\omega t}J(\omega) \, .\label{eq:time-dep-couplings}
\end{align}
It can also be noted that the coupling coefficient defined by Eq.~(\ref{eq:time-dep-couplings}) can be expressed as a Fourier transform
\begin{align}
    \gamma_n(t) = g\sqrt{2\pi}\mathcal{F}[P_n J](t)\ ,\label{eq:Fourier-cc}
\end{align}
where $\mathcal{F}[\circ]$ is the Fourier transform of $\circ$.
In this new representation of the system and the environment, the chain modes are now \textit{non-interacting} and all coupled to the system with time-dependent coupling~\cite{liu_improved_2021}.
In the interaction picture the chain mapping brings us from a star topology (see Fig.~\ref{fig:schematics-models}a) of the system-environment interactions with constant coupling strengths $\sqrt{J(\omega)}$ to another star topology where the couplings between the system and the environmental modes are time-dependent $\gamma_n(t)$.

\section{Equivalence in the Non-Markovian case\label{sec:non-Markovian}}

In this section we prove that non-Markovian collision models can be recovered from chain mapping.
\begin{theorem}
    For any positive bath spectral density $J(\omega)$, chain mapping is equivalent to a non-Markovian collision model with $\Delta t = \frac{\pi}{\omega_c}$, where $\omega_c$ is the bath cut-off angular frequency. \label{thm:non-Markovian}
\end{theorem}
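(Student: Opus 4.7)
The plan is to rewrite the chain-mapped Hamiltonian in the interaction picture, \eqref{eq:simp_ham_chain}, directly in the collision-model form of \eqref{eq:c_model} by performing a change of basis from chain modes $\hat{b}_n$ to time-bin ancillae $\a_m$, with the critical value $\Delta t=\pi/\omega_c$ emerging from a Nyquist--Shannon argument applied to the bandlimited bath.

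First, I would discretize the continuous chain Hamiltonian by averaging it over consecutive time bins $[t_{n-1},t_n]$ of width $\Delta t$, producing for each bin a step-wise generator whose chain-mode couplings are the bin averages $\bar\gamma_{k,n}=\frac{1}{\Delta t}\int_{t_{n-1}}^{t_n}\gamma_k(t)\,\d t$. Second, I would express the time-bin ancilla of \eqref{eq:ancilla} as a linear transformation of the chain modes by combining the inverse chain mapping of \eqref{eq:inverse-chain-mapping} with the definition of $\a(t)$ in \eqref{eq:time_ops}, giving $\a_m=\sum_k C_{mk}\hat{b}_k$ for explicit coefficients $C_{mk}$ built from an integral of $U_k(\omega)\e^{-\i\omega t}$ over $\omega\in[0,\omega_c]$ and over the $m$-th time bin.

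The crucial step is to prove that the matrix $C$ is unitary exactly when $\Delta t=\pi/\omega_c$. Since $J(\omega)$ has compact support $[0,\omega_c]$, every bath mode function appearing in the chain mapping is bandlimited to the same interval; the Nyquist--Shannon theorem then ensures that the full bath Hilbert space is faithfully represented by the time-bin operators sampled precisely at $\Delta t=\pi/\omega_c$. A coarser step aliases high-frequency content, whereas a finer step overcompletes the basis and makes the ancillae linearly dependent, so $C$ fails unitarity in both regimes. Inverting $C$ and substituting $\hat{b}_k=\sum_m C^{*}_{mk}\a_m$ into the discretized generator yields $\h^I_n=\h_S+\hat{A}_S\sum_m(W_{nm}\a_m+\hc)$ with $W_{nm}=\sum_k \bar\gamma_{k,n}\,C^{*}_{mk}$. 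A direct calculation using $\gamma_k(t)=g\sqrt{2\pi}\,\mathcal{F}[P_k J](t)$ together with the completeness relation $\sum_k U_k(\omega)U_k(\omega')=\delta(\omega-\omega')$ of the orthonormal chain functions then collapses this double sum to the single Fourier kernel of \eqref{eq:kern}.

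The main obstacle is the rigorous invocation of Nyquist--Shannon: the bath operators $\aw$ live on the positive half-line $[0,\infty)$ rather than the full real line, so the standard one-dimensional sampling theorem must be adapted to the one-sided frequency support, for instance by even-extending $\sqrt{J(\omega)}$ to the whole axis or by moving to an analytic-signal representation of the bath before invoking the theorem. Controlling this step is what pins down the precise constant $\Delta t = \pi/\omega_c$ (rather than any merely sufficient $\Delta t \le \pi/\omega_c$) and is therefore the technical core of the equivalence.
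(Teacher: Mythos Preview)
Your route is genuinely different from the paper's and, as written, has a real gap at its core step.

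The paper does \emph{not} change basis from chain modes to the time-bin ancillae of \eqref{eq:ancilla}. Instead it performs the chain mapping with respect to the \emph{flat} measure (Legendre polynomials), irrespective of the actual $J$, so that the coupling factorizes as $\gamma_n(t)=(\mathcal{F}[\sqrt{J}]*\gamma_n^{\mathrm{M}})(t)$. The key technical input is Lemma~\ref{lem:gamma}, proved via the plane-wave expansion onto Legendre polynomials and the large-argument asymptotics of spherical Bessel functions, which gives $\gamma_n^{\mathrm{M}}(t)\to 2\pi g\,\delta(t-n\pi/\omega_c)$. The value $\Delta t=\pi/\omega_c$ therefore emerges from the explicit temporal localization of the Legendre kernel, not from a sampling-theorem argument (Nyquist--Shannon is invoked only as an \emph{a posteriori} consistency remark). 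The ancillae are then simply \emph{identified} as rescaled chain modes, $\hat{a}_n=(2\pi/\sqrt{\Delta t})\,\hat{b}_n$, and the resulting rates $W_{mn}$ carry a \emph{single} time integral --- the paper stresses that they are distinct from, and more exact than, the doubly-averaged kernel of \eqref{eq:kern} that you are trying to reproduce.

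The gap in your approach is the claim that $C$ is unitary precisely when $\Delta t=\pi/\omega_c$. Nyquist--Shannon yields only the inequality $\Delta t\le\pi/\omega_c$ for faithful reconstruction, and oversampling does not make an orthonormal family ``linearly dependent'' in a way that breaks unitarity; it produces a redundant frame, which is a different phenomenon. Conversely, whether the time-bin mode functions are even orthonormal when restricted to $\omega\in[0,\omega_c]$ (rather than $\omega\in\mathbb{R}$) is delicate and does not obviously single out $\pi/\omega_c$ either. So as stated you have no mechanism that pins down the \emph{equality}. If you want to salvage this route you would need an independent argument fixing $\Delta t=\pi/\omega_c$; the paper's Lemma~\ref{lem:gamma} is exactly such an argument, and once it is in hand the unitary $C$ and the collapse to \eqref{eq:kern} become superfluous.
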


In the chain mapping approach there is no fundamental difference between the Markovian and non-Markovian case.
Here we want to discuss the general case of non-Markovian environment, namely when the SD is frequency-dependent. 
The following derivation applies to any SD including, for instance, the highly structured ones found in biological contexts~\cite{caycedo-soler_exact_2022, ratsep_electronphonon_2007}.
As outlined above, the usual chain mapping is to use the unitary transformation defined by the set of orthonormal polynomials with respect to the measure $J(\omega)$ (see Appendix~\ref{sec:orthopol}).
Thus, for different SD the chain operators $\bn$ would be a different linear combination of the normal modes $\aw$.
In any case, the time-dependent coupling coefficients are given by \eqref{eq:Fourier-cc}.
These coupling coefficients have, a priori, an unknown behaviour.

The proof of Thm.~\ref{thm:non-Markovian} relies on noting the following fact.
If we perform the chain mapping unitary transformation in \eqref{eq:chain-mapping} with respect to a flat measure regardless of the nature of the actual SD, we can see that the time-dependent couplings $\gamma_n(t)$ will be given by the convolution of the Fourier transform of the square-root of the SD (i.e.~the frequency-dependent system-environment coupling strength) and the flat measure coupling coefficients $\gamma_n^\text{M}(t)$
\begin{align}
   \gamma_n(t) &= \left(\mathcal{F}[\sqrt{J}]*\gamma_n^{\text{M}}\right)(t)\ .
   \label{eq:non-Markovian_coupling} 
\end{align}

\begin{lemma}
    For a flat SD, the coupling coefficient $\gamma_n^{\text{M}}(t)$ between the system and any chain mode $n$ is non-zero only at a single time $t_n$. \label{lem:gamma}
\end{lemma}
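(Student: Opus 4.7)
The plan is to specialize the chain-mapping coupling~\eqref{eq:time-dep-couplings} to a flat reference SD, evaluate the resulting frequency integral in closed form, and then read off the ``single time $t_n$'' statement at the coarse-graining scale $\Delta t = \pi/\omega_c$ selected by Thm.~\ref{thm:non-Markovian}. I take the flat reference SD to be supported on $[0,\omega_c]$, so that the orthonormal polynomials are the shifted Legendre polynomials $P_n(\omega) = \sqrt{(2n+1)/\omega_c}\,L_n\!\left(2\omega/\omega_c - 1\right)$, where $L_n$ denotes the standard Legendre polynomial on $[-1,1]$. Substituting into~\eqref{eq:time-dep-couplings}, changing variables $x = 2\omega/\omega_c - 1$, and factoring out the phase $\e^{-\i\omega_c t/2}$ reduces the coupling to a Fourier--Legendre integral of $L_n(x)\,\e^{-\i\omega_c t x/2}$ on $[-1,1]$.

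Next I would invoke the classical Fourier--Legendre identity $\int_{-1}^{1} L_n(x)\,\e^{-\i\alpha x}\,\d x = 2(-\i)^n\,j_n(\alpha)$, where $j_n$ is the spherical Bessel function of the first kind. This yields the closed form
\begin{equation*}
\gamma_n^{\text{M}}(t) \;\propto\; (-\i)^n\,\e^{-\i\omega_c t/2}\,j_n(\omega_c t/2),
\end{equation*}
i.e.~an oscillatory carrier modulated by a spherical Bessel envelope whose principal support is a mode-dependent window localized around a time $t_n$ fixed by $n$ and $\omega_c$.

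The final step promotes ``localized around $t_n$'' to the exact ``non-zero only at a single time'' statement of the lemma by averaging against the natural collision bins of width $\Delta t = \pi/\omega_c$. Concretely, I would define $\bar\gamma_n^{\text{M}}[m] \eqdef \Delta t^{-1}\int_{t_{m-1}}^{t_m}\gamma_n^{\text{M}}(t)\,\d t$ with $t_m = m\pi/\omega_c$ and show that it is non-zero for exactly one value of $m$. Via Fubini this reduces to projecting $L_n$ onto a sinc-modulated exponential on $[-1,1]$ whose phase advances by exactly $\pi$ per bin; the vanishing moments of $L_n$ together with this $\pi$-phase increment should collapse the bin sum to a single surviving term, identifying $t_n$ with one specific collision bin. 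The hard part is precisely this reduction: $j_n(\omega_c t/2)$ is not literally supported on a point, so the statement can only hold in the coarse-grained sense, and the argument hinges on an exact resonance between the Legendre basis underlying the chain map and the sinc windows of width $\pi/\omega_c$ underlying the collision-model ancillae --- essentially a Nyquist sampling identity on a bandlimited spectrum. Verifying that this orthogonality is exact (rather than only sharply peaked) is what ultimately pins $\Delta t = \pi/\omega_c$ as the correct coarse-graining timescale and underwrites the equivalence claim of Thm.~\ref{thm:non-Markovian}.
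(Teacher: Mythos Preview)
Your route to the spherical-Bessel form $\gamma_n^{\text{M}}(t)\propto (-\i)^n\e^{-\i\omega_c t/2}j_n(\omega_c t/2)$ is exactly what the paper does (they phrase the Fourier--Legendre identity as a plane-wave expansion, but it is the same computation). The divergence is in the last step. The paper does \emph{not} coarse-grain at finite $\omega_c$; instead it takes the large-argument asymptotics $j_n(z)\sim z^{-1}\sin(z-n\pi/2)$, rewrites $\gamma_n^{\text{M}}(t)$ as $\omega_c$ times a shifted $\mathrm{sinc}$, and then sends $\omega_c\to\infty$ using the nascent-delta identity $\lim_{\omega_c\to\infty}\omega_c\,\mathrm{sinc}(\pi\omega_c x)=\delta(x)$ to obtain $\gamma_n^{\text{M}}(t)\cong 2\pi g\,\delta(t-n\pi/\omega_c)$. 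The ``non-zero only at a single time'' statement is thus a distributional/asymptotic claim, and it is precisely this delta that is substituted into the convolution $\gamma_n=\mathcal{F}[\sqrt{J}]*\gamma_n^{\text{M}}$ in the proof of Thm.~\ref{thm:non-Markovian}.

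Your alternative---averaging $\gamma_n^{\text{M}}$ over bins of width $\pi/\omega_c$ and hoping that all but one bin vanish exactly---does not go through. Already for $n=0$ one has $\gamma_0^{\text{M}}(t)\propto(1-\e^{-\i\omega_c t})/(\i\omega_c t)$, and a direct check shows $\int_{(m-1)\pi/\omega_c}^{m\pi/\omega_c}\gamma_0^{\text{M}}(t)\,\d t\neq 0$ for $m\geq 2$ (the $1/t$ tail contributes a nonzero $\log$ piece that the oscillatory part does not cancel). More generally the orthogonality of the Legendre basis against the sinc time-bin modes is only approximate at finite $\omega_c$; there is no exact Nyquist identity of the kind you invoke, which is why the paper retreats to the $\omega_c\to\infty$ limit (and treats the finite-$\omega_c$ spread of $j_n$ separately in Appendix~\ref{app:numerical-threshold}). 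So the gap is real: to close the argument you need the asymptotic-delta step, not a finite-bandwidth bin-orthogonality.
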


We consider a flat spectral density up to a cut-off frequency $\omega_c$
\begin{align}
    J(\omega) &= \Pi_{\omega_c}(\omega)\ ,
\end{align}
where $\Pi_{\omega_c}(\omega)$ is the indicator function of the interval $[0,\omega_c]$ where it takes the value 1 while vanishing on the complement.
Introducing a frequency cut-off to our environment makes the calculations below more technical, however this is how numerically exact methods such as TEDOPA are implemented in practice.
Hence we believe that the results obtained below will prove more fruitful with the introduction of this frequency cut-off.
With this choice of SD, the orthonormal polynomials defining the chain modes are shifted Legendre polynomials (see Appendix~\ref{sec:orthopol}).
It can be shown that the cut-off frequency $\omega_c$ always corresponds to the cut-off frequency of the bath SD $J(\omega)$ (see Appendix~\ref{app:cut-off-frequency}).

\begin{proof}
The coupling coefficients are given by 
\begin{align}
    \gamma_n^\text{M}(t) &= g\int_0^{\omega_c} \d\omega P_n^\text{shifted}(\omega)\e^{-\i\omega t}\ .\label{eq:markovian-coupling}
\end{align}
The shifted polynomials can be expressed in terms of the regular Legendre polynomials $P_n$ which are defined on the support $[-1, 1]$: $P_n(x) = P_n^\text{shifted}(\frac{x+1}{2})$, with $x = \omega/\omega_c$. Hence, we have
 \begin{align}
     \gamma_n^\text{M}(t) &= g\omega_c\int_0^{1} \d x \,  P_n^\text{shifted}(x)\e^{-\i x \omega_c t}\\
     &= g\omega_c\frac{\e^{-\i\frac{\omega_c t}{2}}}{2} \int_{-1}^{1} \d x \, P_n(x) \e^{-\i x\frac{\omega_c t}{2}}  \label{eq:gamma_regular}\ .
 \end{align}
We can perform a so called plane-wave expansion of the exponential on the Legendre polynomials~\cite{landau_quantum_2013}
\begin{align}
    \e^{-\i x \frac{\omega_c t}{2}} &= 2 \sum_{l=0}^{\infty} \i^l (2l + 1) P_l(x) \sqrt{\frac{\pi}{\omega_c t}} J_{n+\frac{1}{2}}\left(\frac{\omega_c t}{2}\right)\ , 
\end{align}
where $J_{\nu}(\theta)$ is the Bessel function of the first kind.
Inserting this expansion in Eq.~(\ref{eq:gamma_regular}) and using the polynomials orthogonality, we have
\begin{align}
    \gamma_n^\text{M}(t) &= \i^n g\omega_c\e^{-\i\frac{\omega_c t}{2}}\sqrt{\frac{\pi}{\omega_c t}}J_{n+\frac{1}{2}}\left(\frac{\omega_c t}{2}\right)\ . \label{eq:fourier}
\end{align}
We can find the limit of the time-dependent coupling coefficients $\gamma_{n}^\text{M}(t)$ when $\omega_c$ is large by using the asymptotic expansion of the Bessel function $J_{\nu}(\theta)$ for large $\theta$~\cite{olver_asymptotics_1997}
\begin{align}
    \gamma_n^\text{M}(t) &\simeq 2\i^n g \omega_c \e^{-\i\frac{\omega_c t}{2}} \frac{\sin\left(\frac{\omega_c t}{2} - n\frac{\pi}{2}\right)}{\omega_c t}\ ,\label{eq:gamma-sinc}
\end{align}
Taking the limit of infinitely large cut-off frequency (see Appendix~\ref{app:derivation-gamma_n}), we have 
\begin{align}
    \gamma_n^\text{M}(t) &\overset{\omega_c\to\infty}{\cong} 2\pi g \delta\left(t-\frac{n\pi}{\omega_c}\right)\ ,\label{eq:deltan}
\end{align}
Hence, for a flat SD, the coupling coefficient $\gamma_n^\text{M}(t)$ between a chain mode $n$ and the system is non-zero only for $t_n = n\pi/\omega_c = n\Delta t$.
\end{proof}

\begin{remark}
Lemma~\ref{lem:gamma} extends naturally to the exactly Markovian case of a spectral density flat along the whole real line.
In that case the spectral density is chosen to be a rectangular function on the interval $[-\frac{\omega_c}{2}, \frac{\omega_c}{2}]$ to ensure the same bandwidth.
The polynomials are thus directly the Legendre polynomials
\begin{align}
    \gamma_n^\text{M}(t) &= g\int_{-\frac{\omega_c}{2}}^{\frac{\omega_c}{2}} \d\omega \, P_n(\omega)\e^{-\i\omega t}\ ,
\end{align}
from which the same derivation follows leading to the same result.
\end{remark}

Equipped with Lemma~\ref{lem:gamma} we can now prove Thm.~\ref{thm:non-Markovian}
\begin{proof}
The time-dependent coupling coefficients are given by
\begin{align}
   \gamma_n(t) &= \left(\mathcal{F}[\sqrt{J}]*\gamma_n^{\text{M}}\right)(t)= 2\pi g\mathcal{F}\left[\sqrt{J}\right](t - t_n)\ . 
\end{align}
Therefore, the chain-mapped interaction-picture interaction Hamiltonian is
\begin{align}
    \hint^I(t) &= \hat{A}_S\sum_{n=0}^{\infty} \left(2\pi g\mathcal{F}\left[\sqrt{J}\right](t-t_n) \bn + \hc \right)\ .
\end{align}
The time integral of the interaction picture Hamiltonian is the generator of the time-ordered time-evolution operator
\begin{widetext}
\begin{align}
    \int_0^{t = N\Delta t} \d t' \, \hint^I(t') & =\hat{A}_S\sum_{n=0}^{\infty} \left(2\pi\left\{g\int_0^t \d t' \, \mathcal{F}\left[\sqrt{J}\right](t'-t_n) \right\} \bn + \hc \right)\\
    &= \hat{A}_S\sum_{n=0}^{\infty} \left(2\pi\left\{\sum_{m=0}^{N-1}g\int_{m\Delta t}^{(m+1)\Delta t} \d t' \, \mathcal{F}\left[\sqrt{J}\right](t'-t_n)\right\} \bn + \hc \right)\\
    &= \sum_{m=0}^{N-1} \hat{A}_S \left(2\pi\sum_{n=0}^{\infty}\left\{g\int_{m\Delta t}^{(m+1)\Delta t} \d t'\,  \mathcal{F}\left[\sqrt{J}\right](t'-t_n)\right\} \bn + \hc \right)\label{eq:kern-chain}\\
    &= \sum_{m=0}^{N-1} \hat{A}_S \left(\sum_{n=0}^{\infty}W_{mn} \frac{2\pi}{\sqrt{\Delta t}}\bn + \hc \right)\Delta t\ ,
\end{align}
\end{widetext}
where $\Delta t = \frac{\pi}{\omega_c}$ and 
\begin{align}
    W_{mn} \eqdef \frac{g}{\sqrt{\Delta t}}\int_{m\Delta t}^{(m+1)\Delta t} \d t'\,  \mathcal{F}\left[\sqrt{J}\right](t'-t_n)\ .
\end{align}
If we consider $\a_n \eqdef \frac{2\pi}{\sqrt{\Delta t}} \bn$ as an ancilla operator, we recover \eqref{eq:c_model} defining non-Markovian collision models
\begin{align}
    \h^I_n & = \h_S + \hat{A}_S \sum_{m = 0}^{\infty} (W_{nm} \a_m + \hc) \ . \label{eq:non-Markovian-generator}
\end{align}
\end{proof}
We note that, as in collision models (see \eqref{eq:ancilla} and  \eqref{eq:kern}), the ancillae $\a_n$ and collision rates $W_{nm}$  scale as $(\sqrt{\Delta t})^{-1}$.
However, there is a fundamental difference between the collision models rates $W_{nm}$ defined in \eqref{eq:kern} and those obtained from the chain mapping approach in \eqref{eq:kern-chain}.
Indeed, \eqref{eq:non-Markovian-generator} is an exact result: no averaging to decouple a convolution product was performed.
The continuous time limit $\Delta t \to 0$ is widely recognized as a source of challenges in quantum collision models since it demands careful consideration and specialized treatment \cite{ciccarello_quantum_2022}. Remarkably, these challenges do not arise in the context of chain mapping, where the limit $\omega_c\to\infty$ is usually never formally taken.
It is thus interesting to see that these two limits become equivalent within the prescription for the time step $\Delta t = \pi/\omega_c$.
We note that this coarse-grained timescale $\Delta t$ satisfies the Shannon-Nyquist sampling theorem.
For non-vanishing $\Delta t$ the collisional generator in \eqref{eq:non-Markovian-generator} remains valid with collision rates $W_{nm}$ being obtained thanks to \eqref{eq:non-Markovian_coupling} and \eqref{eq:fourier}.
The sequential interaction between the chain modes and the system is preserved by the convolution in Eq.~(\ref{eq:non-Markovian_coupling}).
Yet, depending on the form of $\mathcal{F}\left[\sqrt{J}\right](t)$, several modes can be interacting with the system at a given time, and conversely chain modes interact more than once with the system.
This new representation of the system-bath interaction is represented in Fig.~\ref{fig:schematics-models}(d).
After a certain time, the number $M$ of chain modes a system interacts with can be considered constant.
This is an instance of collision model with multiple non-local collisions \cite{ciccarello_quantum_2022} with $M$ ancillae at a time.

\section{Equivalence in the Markovian case \label{sec:markovian}}
The case of Markovian collision models is a corollary of Thm.~\ref{thm:non-Markovian}.
It follows naturally from Lemma~\ref{lem:gamma} that shows that, for a flat SD, a chain mode $n$ couples to the system only at single time $t_n$.
\begin{corollary}
    If the bath spectral density is flat with a frequency cut-off $\omega_c$ larger than the energy scale of the system (i.e.~a Markovian environment), then chain mapping is equivalent to a collision model with $\Delta t = \frac{\pi}{\omega_c}$.\label{thm:markovian}
\end{corollary}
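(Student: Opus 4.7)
The plan is to observe that Corollary~\ref{thm:markovian} is essentially the specialization of Theorem~\ref{thm:non-Markovian} to a strictly flat spectral density, in which case the convolution in \eqref{eq:non-Markovian_coupling} becomes trivial and Lemma~\ref{lem:gamma} applies directly. Since the Markovian assumption is precisely that $J(\omega) = \Pi_{\omega_c}(\omega)$, the time-dependent couplings in the chain-mapped interaction-picture Hamiltonian \eqref{eq:simp_ham_chain} are already the $\gamma_n^{\text{M}}(t)$ of Lemma~\ref{lem:gamma}, which in the $\omega_c \to \infty$ limit collapse to Dirac deltas supported at $t_n = n\pi/\omega_c$.

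First I would integrate $\hint^I(t)$ over a time bin of width $\Delta t = \pi/\omega_c$ and exploit the delta support of $\gamma_n^{\text{M}}$ to conclude that in the $m$-th bin only the single chain mode $\hat{b}_m$ contributes. Rescaling $\a_m \eqdef \frac{2\pi}{\sqrt{\Delta t}}\hat{b}_m$ exactly as in the proof of Thm.~\ref{thm:non-Markovian} then produces the collision-model generator \eqref{eq:c_model} with a diagonal collision matrix $W_{nm} \propto \delta_{nm}$, which is precisely the Markovian case where each ancilla interacts with the system exactly once.

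Next I would verify the three defining properties \ref{cond:1}--\ref{cond:3} of Markovian collision models on this representation. Property \ref{cond:3} is immediate from the diagonal structure of $W_{nm}$: ancilla $\a_m$ only couples in its own bin and can be discarded afterwards. Property \ref{cond:2} follows from the absence of $\bnd\bn$-type terms in \eqref{eq:simp_ham_chain} -- moving to the interaction picture with respect to the bath Hamiltonian has already removed all probe--probe interactions -- and property \ref{cond:1} follows from the unitarity of the chain mapping \eqref{eq:chain-mapping}, which carries an uncorrelated Gaussian initial state of the continuous modes $\aw$ into a product state of the chain modes $\bn$, and hence of the $\a_m$.

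The step I expect to be most delicate is the delta-function collapse at the bin boundaries: since the points $t_n = n\Delta t$ sit exactly at the interface of consecutive bins, some care is needed either via a half-open bin convention or by working at finite $\omega_c$ with the sinc form \eqref{eq:gamma-sinc} and checking that only the $n=m$ contribution survives as $\omega_c \to \infty$. The hypothesis that $\omega_c$ exceeds the system's characteristic energy scale enters at this point as well: it justifies neglecting the free system evolution over a single bin, which is needed for the resulting discrete-time generator to reproduce a genuine Markovian collision model and, in the continuous-time limit, a Lindbladian evolution.
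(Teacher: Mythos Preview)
Your proposal is correct and follows essentially the same route as the paper: invoke Lemma~\ref{lem:gamma} so that for the flat SD the couplings $\gamma_n^{\text{M}}(t)$ collapse to deltas at $t_n=n\pi/\omega_c$, integrate the interaction-picture Hamiltonian over bins of width $\Delta t=\pi/\omega_c$, and rescale $\a_n=\frac{2\pi}{\sqrt{\Delta t}}\bn$ to recover the collision-model form with diagonal $W_{nm}$.

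The one presentational difference worth noting is that the paper does not verify \ref{cond:1}--\ref{cond:3} separately; instead it argues that the individual generators $\h^I_n$ mutually commute (and commute with $\h_S$, via the assumption $[\h_S,\hat A_S]=0$ or a further passage to the system interaction picture) and therefore the full evolution operator \emph{factors} as $\U(t)=\U_N\cdots\U_1\U_0$ with $\U_K=\e^{-\frac{\i}{\hbar}\h^I_K\Delta t}$. That factorisation is the operational statement of a Markovian collision model, and it is where the hypothesis on $\omega_c$ is actually used in the paper --- not to ``neglect the free system evolution over a bin'' as you suggest, but to justify the commutation (or, failing that, a Trotterisation). Your verification of \ref{cond:1}--\ref{cond:3} is a legitimate alternative packaging of the same content; just be aware that your argument for \ref{cond:1} (unitaries preserve product states) is not literally true in general and should instead appeal to the fact that the chain mapping is a linear Bogoliubov transformation sending the bath vacuum to the chain vacuum.
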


\begin{proof}
The time-evolution operator in the interaction picture is 
\begin{align}
    \U(t) &= \overset{\leftarrow}{T}\exp\left(-\frac{\i}{\hbar}\int_{0}^{t}\d\tau\,\h^I(\tau)\right)\ \label{eq:time-evolution},
\end{align}
where $\overset{\leftarrow}{T}$ is the time-ordering operation.
Given lemma~\ref{lem:gamma} and Eq.~(\ref{eq:time-evolution}), we have
\begin{align}
    \U(t) &= \overset{\leftarrow}{T}\exp\left(-\frac{\i}{\hbar}\left(\h_S t + \hat{A}_S\sum_{n=0}^{N} \gamma_n \bn + \gamma_n^{*} \bnd \right)\right) \\
    \U(t) &= \overset{\leftarrow}{T}\exp\left(-\frac{\i}{\hbar}\sum_{n=0}^{N}\h^{I}_n\Delta t\right)
\end{align}
where we introduced the coarse-grained timescale $\Delta t = \frac{\pi}{\omega_c}$, $N = t/\Delta t$, $\gamma_n = \int_0^t \gamma_n(\tau)\d\tau = (2\pi)^{\frac{3}{2}} g$. 
All the terms in the sum commute with one another, and we can also assume without loss of generality that they commute with $\h_S$\footnote{This is the same situation as in the derivation of collision models, either $\left[\h_S, \hat{A}_S\right] = 0$, or we move to the interaction picture with respect to the system and bath free Hamiltonians. 
In the `worse' case scenario the evolution operator can be Trotterized.}, thus we have $[\h^I_n , \h^I_m] = 0$.
We can write the time evolution operator as
\begin{align}
    \U(t) &= \U_N\U_{N-1}\ldots\U_{1}\U_{0}\, ,
\end{align}
with $\U_K = \e^{-\frac{\i}{\hbar}\h^{I}_K\Delta t}$.
Hence, we have made explicit that, in the Markovian limit, the time-evolution takes the form of a succession of interactions between the system and individual non-interacting environmental modes, with time-steps $\Delta t$.
\end{proof}

This shows that we recovered a Markovian collision model for bosonic environments starting from the chain mapping of a microscopic Hamiltonian.
Here again, the connection with collision model can be made even more explicit if we recast the interaction part of the argument of the time evolution operator as follows
\begin{align}
    &\int_{0}^{t} \d\tau\, \h_\text{int}^{I}(\tau) = \Delta t \hat{A}_S\sum_{n=0}^{N} \frac{\sqrt{2\pi} g}{\sqrt{\Delta t}}\hat{a}_n + \hc \ ,
    \label{eq:coll_int_mark}
\end{align}
where $\hat{a}_n \eqdef \frac{2\pi}{\sqrt{\Delta t}}\bn$ would play the role of the ancilla operator, and the characteristic factor of $(\sqrt{\Delta t})^{-1}$ of the collision model coupling strength is recovered~\cite{ciccarello_quantum_2022}.\\
If we compare \eqref{eq:coll_int_mark} with \eqref{eq:simp_ham_chain} we can observe that collision models and chain mapping are two different ways to take into account the same time-dependent behavior of the Hamiltonian, which arises when moving to the interaction picture. In collision models the interaction Hamiltonian is fixed in time and the time dependence is represented by the sequential interaction with the time modes whereas in the chain-mapping picture the time dependence is entirely attributed to the coupling $\gamma_n(t)$.

\section{Applications\label{sec:application}}

\subsection{Sources of Error in Collision Models\label{sec:errors}}

From their canonical derivation collision models rely on an expansion of the time-evolution operator to second-order in $\Delta t$ which thus leads to a so called `truncation error' of the reduced system's dynamics of order $\mathcal{O}(\Delta t^3)$~\cite{grimmer_open_2016, ciccarello_quantum_2022}.
In numerical simulations the time-evolution operator is usually approximated using a Trotter-Suzuki decomposition~\cite{suzuki_generalized_1976}, inducing a `Trotter error' that can be matched with the usual truncation error $\mathcal{O}(\Delta t^3)$ by using a second order Troterrization.
The error originating from the truncation of the infinite-dimensional local Hilbert spaces of the bath modes vanishes with the increase of the aforementioned local dimensions~\cite{woods_simulating_2015}.
When combined with tensor networks, another common numerical error is the Singular Value Decomposition truncation error.
Properly choosing the threshold for discarding singular values enables to keep this error lower than the previous ones.

However, for non-Markovian collision models, there is an additional source of error to take into account that also stems from the very derivation of the method: the bath correlation function sampling error.
This sampling error is introduced in \eqref{eq:c_model}, \eqref{eq:kern} and \eqref{eq:ancilla} when averaging the Fourier transform of the square-root of the SD to get rid of the convolution product.
The order of the sampling error of the bath correlation function is a priori unknown and needs to be quantified in order to be compared to the other sources of error.
Given that the SD is non-negative, sampling $\sqrt{J(\omega)}$ gives the same information as sampling $J(\omega)$.
The Shannon-Nyquist sampling theorem tells us that when we sample with a frequency $1/\Delta t$, we can reconstruct the SD up to $\omega = \pi/\Delta t$ using so called `perfect reconstruction' with, for instance, Whittaker's interpolation~\cite{shannon_communication_1949, whittaker_interpolatory_1935, bremaud_mathematical_2002}.
Hence when $\Delta t \leq \pi/\omega_c$ the SD is perfectly sampled, and when $\Delta t > \pi/\omega_c$ a sampling error is introduced.
For Markovian collision model this sampling error does not exist as any time-step $\Delta t$ yields to the exact SD.
That is why a single ancilla is sufficient to describe the dynamics.
However, for non-flat SD this sampling error can become larger than the truncation (or Trotter) error for $\Delta t > \pi/\omega_c$ even though the time step can be made arbitrary small numerically.\\
For the Spin Boson Model (SBM), the impact of this sampling error on the expectation value of an observable can be upper bounded~\cite{mascherpa_open_2017}.
The sampling error on the expectation value $\langle\sigma_z\rangle(t)$ after a single time step $\Delta t$ is 
\begin{align}
    \epsilon_\text{samp} &\leq \exp\left( 4\int_0^{\Delta t}\d t' \int_0^{t'}\d t'' |\Delta C(t' - t'')| \right) - 1\ .
\end{align}
Let us consider an Ohmic SD~${J(\omega) = 2\alpha\omega \Pi_{\omega_c}(\omega)}$,
\begin{align}
    \Delta C(\tau) &= \int_{\frac{\pi}{\Delta t}}^{\omega_c} 2\alpha\omega\e^{-\i\omega \tau}\d\omega\label{eq:diffC}\\
    &= \frac{2\alpha}{\tau^2}\left(\e^{-\i\omega_c\tau}(1 + \i\omega_c \tau) - \e^{-\i\frac{\pi\tau}{\Delta t}}\left(1 + \i\frac{\pi\tau}{\Delta t}\right)\right)
\end{align}
is the difference between the exact bath correlation function and the sampled one.
The sampling error vanishes for $\Delta t \leq \pi/\omega_c$ because the upper and lower integration bounds in \eqref{eq:diffC} are equal.
For ${\Delta t \geq \frac{\pi}{\omega_c}}$ the error is
\begin{align}
    \epsilon_\text{samp} &\leq 2\pi^2\alpha\left(\left(\frac{\omega_c\Delta t}{\pi}\right)^2 - 1\right)\ .
\end{align}
Thus, for a SBM with an Ohmic SD, when $\Delta t \leq \pi/\omega_c$ the leading error is the truncation/Trotter error $\mathcal{O}(\Delta t^3)$, and when $\Delta t > \pi/\omega_c$ the leading error is the sampling error $\mathcal{O}(\Delta t^2)$.

\subsection{Spin Boson Model}

\begin{figure*}[t]
    \centering
    \includegraphics[width=\textwidth]{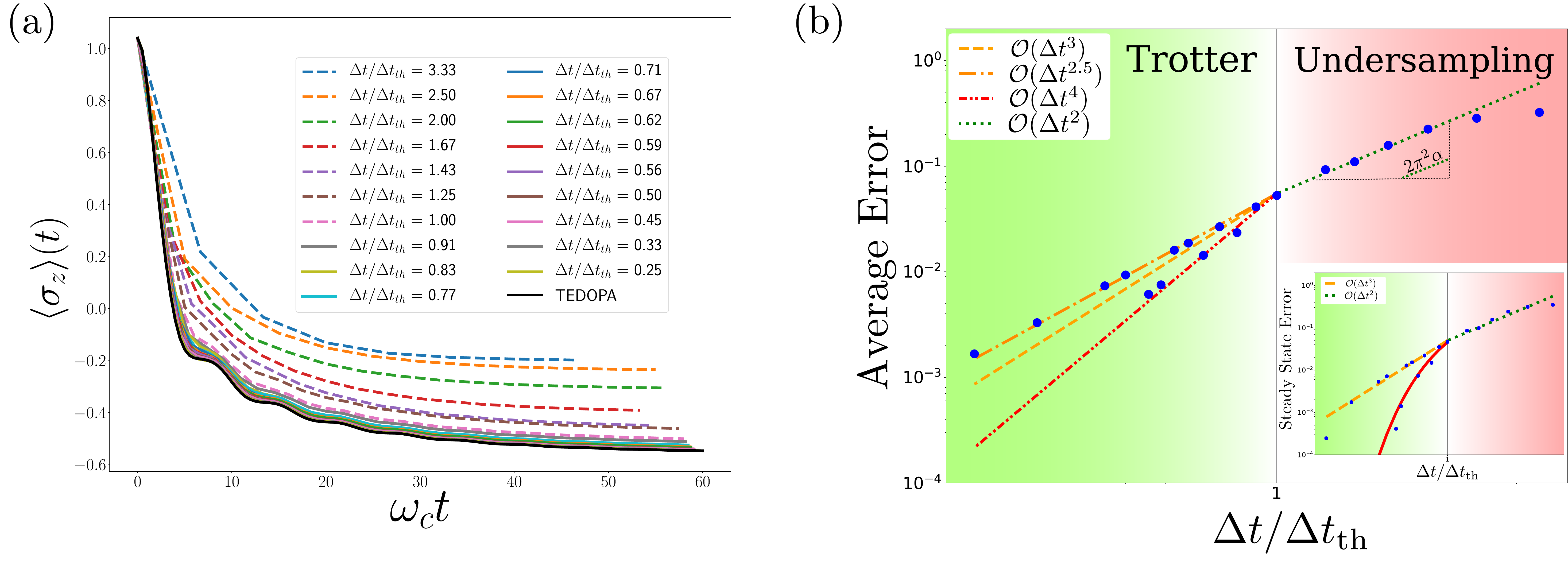}
    \caption{(a) Comparison $\langle\sigma_z\rangle(t)$ between a non-Markovian collision model, for different time steps $\Delta t$, with reference TEDOPA results (black solid line). (b) Main panel: Average error between the collision model dynamics obtained for a given time step $\Delta t$ and the reference results. Inset: Distance between the steady state expectation $\langle\sigma_z\rangle(t\to\infty)$ to the reference results as a function of the collision model time step $\Delta t$, the red solid line is a guide to the eye. We can see that $\Delta t_\text{th}$ is a threshold value separating two distinct scaling regimes: for $\Delta t < \Delta t_\text{th}$ the average and steady state errors scale as $\mathcal{O}(\Delta t^3)$, and for $\Delta t \geq \Delta t_\text{th}$ they scale as $\mathcal{O}(\Delta t^2)$. The simulations parameters are $\omega_0 = 0.2\omega_c$, $\delta = 0$, $\alpha = 0.1$.}
    \label{fig:parabolic}
\end{figure*}

The SBM is a paradigmatic model in the field of OQS.
While being simple -- the model consists of a single spin linearly coupled to a bosonic bath -- its physics is rich (and exhibits non-Markovian behaviour) and it has been used to model magnetic impurities, charge transfer, chemical reactions, strangeness oscillations of the $K^0$ mesons, or decoherence~\cite{breuer_theory_2007, weiss_quantum_2012}.
On top of its dynamics being non-trivial, the model is also non-solvable analytically and has become a test-bed for numerical methods describing open systems.
From Eq.~(\ref{eq:vibronic}) the SBM Hamiltonian is obtained by setting 
\begin{align}
    \h_S &= \frac{\omega_0}{2}\hat{\sigma}_z + \delta\hat{\sigma}_x\ \ \ \text{and}\ \ \ \hat{A}_S = \hat{\sigma}_x\ .
\end{align}
We note that in this model no rotating wave approximation has been performed.
In the following we consider an Ohmic SD with a hard cut-off $J(\omega) = 2\alpha\omega \Pi_{\omega_c}(\omega)$ with $\Pi_{\omega_c}(\omega)$ the rectangular function on $[0, \omega_c]$.
Figure~\ref{fig:parabolic}~(a) shows the expectation value of $\langle\sigma_z\rangle(t)$ obtained with a non-Markovian collision model for several values of $\Delta t$, compared with the dynamics obtained with the regular Schr\"odinger picture chain mapping (i.e. the TEDOPA method) taken as a reference result.
The TEDOPA results are obtained considering 16 environmental modes, and the maximum bond dimension reached during the simulation is $D = 15$.
The non-Markovian collision model has been implemented with tensor networks methods: The \{System + Ancillae\} density matrix is represented as a purified Matrix Product State~\cite{feiguin_finite-temperature_2005, cuevas_purifications_2013} and the time-evolution is performed with the standard time-evolving block decimation (TEBD) method~\cite{vidal_efficient_2004, paeckel_time-evolution_2019}.
The results are obtained with a number of ancillae inversely proportional to $\Delta t$ (e.g. 35 for $\Delta t = 1/\omega_c$, 70 for $\Delta t = 2/\omega_c$, and 280 for $\Delta t = 1/2\omega_c$), and a maximal bond dimension of $D = 32$.
We would like to point out that, to the best of our knowledge, this is the first time that the SBM has been simulated with a collision model.
It has to be noted that, because the cut-off frequency $\omega_c$ of the SD remains `small' in numerical simulations, the threshold time-step in these simulations is $\Delta t_\text{th} = 2/\omega_c$ instead of $\pi/\omega_c$ (see Appendix~\ref{app:numerical-threshold}).
This is due to the asymptotic behaviour of spherical Bessel functions.
On Fig.~\ref{fig:parabolic}~(a) we can see that both the steady state and the transient dynamics are better described when $\Delta t$ diminishes.
For instance, the oscillatory dynamics start to be well caught around $\Delta t = 2/\omega_c$.
The dynamics converges monotonically from above with decreasing time steps.
Figure~\ref{fig:parabolic}~(b) (main panel) shows the average error during the dynamics of the collision model simulations with respect to the reference one.
We can clearly see that there are two different scaling regime separated by the threshold value $\Delta t_\text{th}$.
For time step smaller than the threshold $\Delta t < \Delta t_\text{th}$ we are in a regime where the deviation is dominated by an error $\mathcal{O}(\Delta t^{2.5})$ associated with the second order Trotterization performed to obtain the collision model time-evolution operator.
We also note that for specific values of $\Delta t$ in this regime the error can be smaller than the Trotter error -- which is perfectly legitimate considering that the Trotter scaling is an upper bound.
This might originate from `local' error cancellation.
The investigation of this `super-performance' is beyond the scope of this paper. 
When the time step is larger than the threshold $\Delta t > \Delta t_\text{th}$ we can see a sudden change in the scaling of the error that is now $\mathcal{O}(\Delta t^2)$ (for large $\Delta t$ the error saturates because $\langle\sigma_z\rangle(t)$ decays exponentially to 0).
We attribute this additional source of error to a fundamental inaccuracy of the collision model in this regime, as can be inferred from the equivalence theorem.
When $\Delta t > \Delta t_\text{th}$ the scaling of the errors in our simulations have a slope of $2\pi^2\alpha$ in agreement with the one expected from the discussion in Sec.~\ref{sec:errors}, and thus shows that in the fundamental inaccuracy regime an aliased sampling of the bath correlation function results in an error of order $\mathcal{O}(\Delta t^2)$.
The distance between the steady state expectation value $\langle\sigma_z\rangle(t\to\infty)$ and the reference one for different values of the time step $\Delta t$ is presented in the inset of Fig.~\ref{fig:parabolic}~(b).
Here again we find the same transition between two scaling regimes of the error at $\Delta t_\text{th}$.
For time steps larger than the threshold $\Delta t > \Delta t_\text{th}$ we have a scaling of $\mathcal{O}(\Delta t^2)$ worse than the Trotter one $\mathcal{O}(\Delta t^3)$ which is recovered for time steps smaller than the threshold value $\Delta t < \Delta t_\text{th}$.
These results show that, in order to give physically accurate results, the chosen time step of the collision model has to be lower or equal to the threshold value $\Delta t_\text{th}$.
This prescription gives a consistent definition to how small the time step needs to be to ensure the validity of collision models.

\section{Discussion\label{sec:conclusion}}

In this paper we introduced an analytical derivation of (Markovian and non-Markovian) collision models based on the chain mapping of the environment that places both on the same footing~\footnote{We would like to note that in this work chain mapping has been used as an analytical tool to derive exact collision models.
In a different spirit, collision models and chain mapping have already been used together in the periodically refreshed baths (PReB) approach~\cite{purkayastha_periodically_2021, purkayastha_periodically_2022}. PReB can be understood as a collision model where a system collides with macroscopic environments -- that can be described as chain-mapped environments -- with large time-steps $\tau$ greater than the memory time of the environments.}.
One consequence of this is a prescription for the time step used in collision models that eliminates the environmental sampling error.
This prediction was tested within the paradigmatic Spin Boson Model where we have shown that the predicted time step identifies a threshold value between a regime where the Trotter error dominates and a fundamental inaccuracy regime related to an under-sampling of the bath SD.
The first consequence of this equivalence is to shed light on a previously overlooked source of error in non-Markovian collision models that is larger than the well-known truncation error of collision models.
Taking into account and characterizing this new error enables the promotion of collision models to the class of numerically exact methods, as they otherwise share the good analytical and numerical properties of chain mapping and its associated numerical methods.

Chain mapping techniques can be enriched from this equivalence result.
On the conceptual side, it improves the understanding of the nature of the chain modes that did not have a firmly grounded physical interpretation~\cite{tamascelli_excitation_2020}.
Indeed, chain modes can now be interpreted as temporal modes.
Collision models have been successfully connected to other open quantum system approaches such as stochastic trajectories or input-output formalism, and have become a framework of choice in quantum thermodynamics.
Approaches based on chain mapping could learn from these connections.
The TEDOPA method suffers from the linear growth of the number of chain modes that need to be considered for an increasing simulation time.
Because ancillae that are no longer interacting can be traced out, collision models do not suffer from this limitation.
Recently, it has been shown that connecting a collection of sinks to the truncated chain-mapped environment can circumvent this fundamental limitation at the price of describing the joint \{System + Environment\} state as a density matrix~\cite{nuseler_fingerprint_2022}.
One could ask whether this approach is formally equivalent to the discarding of ancillae in collision models.\\

Even though collision models can be defined from microscopic models they are often stated as a starting assumption.
The equivalence results presented in this paper allow a more systematic derivation of collision models from microscopic models.
Indeed chain mapping can be used to derive a collision model especially in contexts where such a derivation is highly non-trivial (for instance quantum optical systems with non-linear bath dispersion relations~\cite{prior_quantum_2013}).
On the side of implementations, chain mapping can be combined with Matrix Product States to give the TEDOPA method. 
Additionally, we have employed collision models with tensor networks to simulate the dynamics of open systems in a regime far outside regimes where typical approximations (in particular RWA and weak coupling) hold.
This is especially important given that chain mapping is well-defined for any positive spectral density.
This implies that experimentally measured or calculated from first principle methods SDs are also accessible to collision models.
Another important consequence for collision models is related to their extension to fermionic environment, which is currently still an open problem.
However, the formalism of chain mapping for fermionic environments already exists~\cite{nuseler_efficient_2020, kohn_efficient_2021, ferracin_spectral_2024}.
Therefore future work will be devoted to the investigation of fermionic collision models.\\

\begin{acknowledgments}
This work is supported by the ERC Synergy grant HyperQ (grant no 856432) and the BMBF project PhoQuant (grant no 13N16110).
The collision model simulations made use of the \texttt{mpnum} Python library~\cite{suess_mpnum_2017}.
The TEDOPA simulations were performed using the open-source \texttt{MPSDynamics.jl} package~\cite{mpsdynamics_zenodo, mpsdynamicsjl_2024}.
\end{acknowledgments}

%

\newpage

\onecolumngrid

\appendix

\section{Collision model derivation \label{sec:collis}}

In the derivation of the microscopic joint system-environment Hamiltonian \eqref{eq:coll_int_intermediate} 
we made the assumption that the system's characteristic frequencies are centered around a positive value we refer to as $\tilde{\Omega}$ and are confined to a limited bandwidth.
Additionally, we extend the domain of the spectral density to include negative frequencies by setting $J(\omega) = 0$ for $\omega < 0$ and assume that no bath modes with negative frequencies are populated at time $t=0$. This allows us to extend the integration limits in \eqref{eq:time_ops} and consequently in \eqref{eq:int_pic_0}, to encompass the entire real axis

\begin{align}
    \h^I(t) &= \h_S + g \hat{A}_S\int_\R \d\omega  \left(\sqrt{J(\omega)} \int_\R \frac{\d t'}{\sqrt{2\pi}} \a(t')\e^{-\i\omega (t-t')} + \hc \right) \notag
    \\&=
    \h_S + g \hat{A}_S \int_\R \d\omega \int_\R \frac{\d t''}{\sqrt{2\pi}} \left(\mathcal{F}[\sqrt{J}](t'') \e^{i\omega t''}\int_\R \frac{\d t'}{\sqrt{2\pi}}  \a(t')\e^{-\i\omega (t-t')} + \hc \right) \notag
    \\&=
    \h_S + g \hat{A}_S  \int_\R \frac{\d t'}{\sqrt{2\pi}} \int_\R \frac{\d t''}{\sqrt{2\pi}}  \left( \mathcal{F}[\sqrt{J}](t'')  \a(t')
    2\pi\delta\Big((t-t')-t''\Big) + \hc \right) \notag
    \\\h^I(t) &=
    \h_S + g \hat{A}_S   \int_\R \d t' \left( \mathcal{F}[\sqrt{J}](t-t')  \a(t')
     + \hc \right)
    \,,
    \label{eq:int_pic_extended}
\end{align}

with the Fourier transform of the spectral density defined as
\begin{align}
    \mathcal{F}[\sqrt{J}](t-t') &= \frac{1}{\sqrt{2\pi}} \int_{-\infty}^{\infty} \!\!\!\!\!\! \d\omega \sqrt{J(\omega)} \e^{-\i\omega (t-t')} \ .\label{eq:appFourier}
\end{align}

We are able to express the \eqref{eq:int_pic_0} in time domain and to discretize it in units of $\Delta t$, which for now is only assumed small with respect to the inverse of the characteristic frequencies of the system-bath interaction. The microscopical discrete-time evolution generator reads 
\begin{align}
    \h^I_n &= \h_S + \frac{g}{\Delta t}\hat{A}_S \int_{t_{n-1}}^{t_{n}}\!\!\!\!\!\!\d t \int_\R\!\!\d t'  \left(\mathcal{F}[\sqrt{J}](t-t') ~ \hat{a}(t')+ \hc\right) \ ,\label{eq:app-nonM-generator}
\end{align}
which is turned into \eqref{eq:coll_int_intermediate}
by using the same coarse-graining time-scale to split the inner integral, as presented in the main text.
The usual Markovian collision model can be retrieved from \eqref{eq:app-nonM-generator} by extending the flat SD to the whole real axis  thanks to the usual assumptions of weak coupling and separations of time-scales~\cite{ciccarello_quantum_2022, Gross_2018}.\\

Note that with the definition of the Fourier transform given in \eqref{eq:appFourier} we have the following relations
\begin{align}
    \mathcal{F}[1] &= \sqrt{2\pi}\delta\ ,\\
    \text{and\ }\ \mathcal{F}[\delta] &= \frac{1}{\sqrt{2\pi}}\ .
\end{align}

\section{Orthonormal polynomials \label{sec:orthopol}}
    \subsection{Orthogonality, recurrence relation and bath chain mapping}
Let $P_n(\omega)$ be a real polynomial of order $n$
\begin{align}
    P_n(\omega) & = \sum_{k=0}^{n} a_k \omega^k\ ,
\end{align}
where $a_k$ are real coefficients.
Two polynomials are said to be orthonormal with respect to a measure $\d J(\omega) = J(\omega)\d\omega$ if
\begin{equation}
    \int_0^{\infty} P_n(\omega)P_m(\omega) J(\omega)\d\omega = \delta_{n,m}\ .
    \label{eq:orthogonality}
\end{equation}
This orthogonality relation defines a unique family of polynomials (up to multiplication by a real constant).

A useful property of these polynomials is that they obey a recurrence relation
\begin{align}
    P_n(\omega) &= (C_{n-1}\omega - A_{n-1})P_{n-1}(\omega) + B_{n-1}P_{n-2}(\omega)\ ,
    \label{eq:recurrence}
\end{align}
where $A_n$ is related to the first moment of $P_n$, $B_n$ and $C_n$ to the norms of $P_n$ and $P_{n-1}$~\cite{appel_mathematics_2007}.
This recurrence relation can be used to construct the polynomials with the conditions that $P_0(\omega)~=~||p_0||^{-1}~=~\left(\int_{\mathbb{R}^{+}} J(\omega)\d\omega \right)^{-\frac{1}{2}}$ and $P_{-1}(\omega) = 0$, with $||\bullet|| = \left(\int_{\mathbb{R}^{+}} |\bullet|^2 J(\omega)\d\omega \right)^{-\frac{1}{2}}$ the norm of $\bullet$ with respect to the measure $J(\omega)$, and $P_n(\omega) = p_n(\omega)||p_n||^{-1}$ ; where the polynomials $\{p_n\}_{n\in\mathbb{N}}$ are the so called \emph{monic polynomials} where the factor $a_n$ in front of $\omega^{n}$ is equal to 1.\\

If we apply the unitary transformation $U_n(\omega) = \sqrt{J(\omega)}P_n(\omega)$ to the interaction Hamiltonian
\begin{align}
\hat{H}_\text{int} &= \hat{A}_S\int_0^{\infty}\sqrt{J(\omega)}\left(\a_\omega + \ad_\omega\right)\d\omega\\
&= \hat{A}_S \int_0^{\infty} \sqrt{J(\omega)} \sum_n U_n(\omega)(\hat{b}_n + \hat{b}_n^\dagger) \d \omega\\
&= \sum_n\hat{A}_S \int_0^{\infty} J(\omega)P_n(\omega)(\hat{b}_n + \hat{b}_n^\dagger)\d\omega\\
&= \sum_n\hat{A}_S \underbrace{\left(\int_0^{\infty} J(\omega)P_n(\omega)P_{0}(\omega)\d\omega\right)}_{\delta_{n,0}}||p_0||(\hat{b}_n + \hat{b}_n^\dagger)\\
\hat{H}_\text{int} &= ||p_0|| \hat{A}_S (\hat{b}_0 +\hat{b}_0^\dagger)
\label{eq:short-range}
\end{align}
we obtain a new expression where the system couples \emph{only} to the first mode with the coupling strength $||p_0|| \eqdef \kappa$.\\
The same transformation applied to the bath Hamiltonian yields, thanks to the recurrence relation, to the following nearest neighbours hopping Hamiltonian where $\varepsilon_n = A_n C_n^{-1}$ is the energy of the chain mode $n$ and $t_n=C_n^{-1}$ is the coupling between mode $n$ and $n+1$
\begin{align}
    \hat{H}_B &= \int_0^{\infty}  \omega\ad_\omega\a_\omega\d\omega\\
    &= \int_0^{\infty}  \omega \sum_{n, m} J(\omega) P_m(\omega)P_n(\omega) \hat{b}^\dagger_m\hat{b}_n\d\omega\\
    &= \sum_{n, m} \int_0^{\infty} \left(\frac{1}{C_m} P_{m+1}(\omega) + \frac{A_m}{C_m}P_m(\omega) - \frac{B_m}{C_m}P_{m-1}(\omega)\right)P_n(\omega)J(\omega)\d\omega  \hat{b}^\dagger_m\hat{b}_n \\
    \h_B &= \sum_n \varepsilon_n \hat{b}_n^\dagger\hat{b}_n + t_n (\hat{b}_n^\dagger\hat{b}_{n+1} + \hat{b}_{n+1}^\dagger\hat{b}_{n}) \ ,
    \label{eq:tightbinding}
\end{align}
where we used the fact that $-B_{n+1}C_{n+1}^{-1} = C_{n}^{-1}$~\cite{chin_exact_2010}. 
From the new bath and interaction Hamiltonians of Eqs.~(\ref{eq:short-range}) and (\ref{eq:tightbinding}) we can see that the unitary transformation $U_n(\omega)$ transforms the bosonic environment composed of a continuum of independent modes --- the star environment --- into a semi-infinite chain of interacting modes (see Fig.~\ref{fig:schematics-models}(c)).
The chain coefficients $\varepsilon_n$, $t_n$ and $\kappa$ can sometimes by calculated analytically, for instance at zero-temperature, otherwise they can be computed numerically with stable and convergent algorithms~\cite{gautschi_algorithm_1994,PolyChaos}.

    \subsection{Polynomials for a flat SD}
For a flat SD with a hard cut-off at the frequency $\omega_c$, the defining orthogonality relation of the polynomials (Eq.~(\ref{eq:orthogonality})) becomes
\begin{align}
    \int_0^{\omega_c} P_n(\omega)P_m(\omega)\d\omega = \omega_c \delta_{n,m}\ ,
\end{align}
where the system-environment coupling strength $g$ has been absorbed in the system operator $\hat{A}_S\to g\hat{A}_S$ for convenience.
Scaling the frequency with the cut-off frequency $\omega = x\omega_c$, we have
\begin{align}
    \int_{0}^{1}P_n(x)P_m(x)\d x = \delta_{n,m}
\end{align}
where the measure is $J(x) = 1$.
This measure and support define the shifted Legendre polynomials whose analytical expression is
\begin{align}
    P_n^\text{shifted}(x) &= \frac{1}{n!}\frac{\d^n}{\d x^n}(x^2 - x)^n\ .
\end{align}
The first polynomial is $P_0(x) = 1$, and the recurrence coefficients are $A_n = \sqrt{(2n+1)(2n+3)}(n+1)^{-1}$, $B_n = -n(n+1)^{-1}(2n+3)^{\frac{1}{2}}(2n-1)^{-\frac{1}{2}}$, and $C_n = 2\sqrt{(2n+1)(2n+3)}(n+1)^{-1}$.
Hence, the chain coefficients are $\varepsilon_n =\frac{\omega_c}{2}$, $t_n = \frac{\omega_c}{2}(n+1)[(2n + 1)(2n+3)]^{-\frac{1}{2}}$, and $\kappa = \sqrt{2}\omega_c$\ .
We note that these recurrence/chain coefficients can also be recovered from the Ohmic spectral density (and its associated shifted Jacobi polynomials) by setting $s = 0$~\cite{chin_exact_2010, abramowitz_handbook_1964}.
Expressed in terms of angular frequency, the polynomials are
\begin{align}
    P_n(\omega) &= \frac{1}{n!}\frac{\d^n}{\d \omega^n}\left(\frac{\omega^2}{\omega_c} - \omega\right)^n\ .
\end{align}

\section{Cut-off frequency of $\gamma_n^\text{M}(t)$ \label{app:cut-off-frequency}}
A realistic bath spectral density $J(\omega)$ will display a maximum frequency $\omega_c$ such that $J(\omega > \omega_c) = 0$.
In complete generality, the SD can thus be written as
\begin{align}
    J(\omega) = \mathcal{J}(\omega)\Pi_{\omega_c}(\omega)\ ,
\end{align}
where $\Pi_{\omega_c}(\omega)$ is the indicator function of the interval $[0,\omega_c]$ where it takes the value 1 while vanishing on the complement, and $\mathcal{J}(\omega)$ is the extension of the SD to the whole real line.
As explained in Sec.~\ref{sec:non-Markovian}, when performing the chain-mapping with respect to a flat SD the time-dependent coupling coefficients become (\eqref{eq:time-dep-couplings})
\begin{align}
        \gamma_n(t) &= g\int_0^{\infty} \d\omega P_n(\omega)\e^{-\i\omega t}\sqrt{J(\omega)}\\
        &= g\int_0^{\infty} \d\omega P_n(\omega)\e^{-\i\omega t}\sqrt{\mathcal{J}(\omega)}\Pi_{\omega_c}(\omega)\\
        \gamma_n(t) &= \left(\mathcal{F}[\sqrt{\mathcal{J}}]*\gamma_n^{\text{M}}\right)(t)\ ,
\end{align}
where $\gamma_n^\text{M}(t)$ corresponds to \eqref{eq:markovian-coupling} and $\mathcal{F}[\circ]$ is the Fourier transform of $\circ$.
Alternatively, it is also possible to define $\gamma_n^\text{M}(t)$ with an arbitrary cut-off frequency $\Omega$, in such a case it can be be shown that the orthogonality of the Legendre polynomials insures that $\Omega = \omega_c$ and that \eqref{eq:fourier} is recovered.

\section{Derivation of $\gamma_n(t)$ for a Markovian environment \label{app:derivation-gamma_n}}
\subsection{Asymptotic limit of $\gamma_n(t)$}

The asymptotic expansion of the spherical Bessel function $\sqrt{\frac{\pi}{\omega_c t}}J_{n+\frac{1}{2}}(\frac{\omega_c t}{2})$ for large $\omega_c$ is~\cite{olver_asymptotics_1997}

\begin{align}
    \sqrt{\frac{\pi}{\frac{\omega_c t}{2}}}J_{n+\frac{1}{2}}\left(\frac{\omega_c t}{2}\right) &\overset{\omega_c\to\infty}{=} \left(\frac{2}{\omega_c t} + \mathcal{O}\left(\left(\frac{\omega_c t}{2}\right)^{-3}\right) \right) \sin \left(\frac{\omega_c t}{2} - \frac{\pi  n}{2}\right)\nonumber\\
    &\ \ \ \ \ \ \ \ \ \ \ +\left(\frac{n (n+1)}{2 (\frac{\omega_c t}{2})^2} + \mathcal{O}\left(\left(\frac{\omega_c t}{2}\right)^{-4}\right)\right) \cos \left(\frac{\omega_c t}{2} - \frac{\pi  n}{2}\right)\\
     &\overset{\omega_c\to\infty}{=} \frac{2}{\omega_c t}\sin \left(\frac{\omega_c t}{2} - \frac{\pi  n}{2}\right) + \mathcal{O}\left(\left(\omega_c t\right)^{-3}\right)\ .
\end{align}

We start by recalling the definition of the Dirac delta in terms of a limit of $\mathrm{sinc}$ function
\begin{align}
    \delta(x) = \lim_{\ell\to 0}\frac{1}{\ell}\sinc{\pi\frac{x}{\ell}}\ ,
\end{align}
where we use the so-called `physicist' convention $\mathrm{sinc}(x) = \sin(x)/x$.\\

Proceeding by identification with \eqref{eq:gamma-sinc} with $1/\ell = \omega_c$
\begin{align}
    \delta\left(\frac{t}{2\pi} - \frac{n}{2\omega_c}\right) &= \lim_{\omega_c \to \infty} 2\frac{\sin(\frac{\omega_c t}{2} - n \frac{\pi}{2})}{t - n \frac{\pi}{\omega_c}} = \lim_{\omega_c \to \infty} 2\frac{\sin(\frac{\omega_c t}{2} - n \frac{\pi}{2})}{t}  = 2\pi \delta\left(t - n \frac{\pi}{\omega_c}\right)\ .
\end{align}

Thus, we have for the flat spectral density coupling strength in \eqref{eq:gamma-sinc}
\begin{align}
    \gamma_n^\text{M}(t) &=\lim_{\omega_c\to\infty} \i^n g \e^{-\i\frac{\omega_c t}{2}} 2\frac{\sin\left(\frac{\omega_c t}{2} - n\frac{\pi}{2}\right)}{t} = \i^n g\e^{-\i n \frac{\pi}{2}}2\pi\delta\left(t - n \frac{\pi}{\omega_c}\right) = 2\pi g \delta\left(t-\frac{n\pi}{\omega_c}\right)\ .
\end{align}

\subsection{Alternative calculation of $\gamma_0(t)$\label{app:gamma0}}
From Eq.~(\ref{eq:time-dep-couplings}), the coupling coefficient between the first chain mode $n = 0$ and the system is given by the convolution of the Fourier transform of the first polynomial $P_0 = 1$ and the Fourier transform of the rectangular function
\begin{align}
    \gamma_n(t) &= \sqrt{2\pi}g\Big(\mathcal{F}[P_n]*\mathcal{F}[\Pi_{\omega_c}]\Big)(t)\ . \label{eq:convolution}
\end{align}
It is well known that the Fourier transform of a rectangular function is a $\mathrm{sinc}$ function.
Precisely
\begin{align}
    \mathcal{F}[\Pi_{\omega_c}](t) & = \e^{-\i\frac{\omega_c t}{2}}\frac{\omega_c}{\sqrt{2\pi}}\sinc{\frac{\omega_c t}{2}}\ .
\end{align}
Hence, for the first mode we get
\begin{align}
    \gamma_0(t) &= \lim_{\omega_c\to\infty} \sqrt{2\pi}g\omega_c\sinc{\frac{\omega_c t}{2}}\e^{-\i\frac{\omega_c t}{2}} = (2\pi)^{\frac{3}{2}} g\delta(t)\label{eq:delta0}
\end{align}
as the Fourier transform of $P_0$ is proportional to the Dirac delta function (which is the neutral element of the convolution product).

\section{Numerical estimation of the maxima of spherical Bessel functions for finite $\omega_c$}
\label{app:numerical-threshold}
For finite $\omega_c$, one can estimate numerically the maxima of the spherical Bessel functions $\sqrt{\frac{\pi}{\omega_c t}}J_{n+\frac{1}{2}}\left(\frac{\omega_c t}{2}\right)$.
The locations $\omega_c t_n$ of these maxima are reported in Fig.~\ref{fig:maxima}, along with a linear fit $\omega_c t_n = 2.05123 n + 1.85029$ which gives a slope of $\simeq 2$ for the dependence of $\omega_c t_n$ on the chain mode label $n$ when $\omega_c$ is large but does not go towards infinity.
\begin{figure}[h]
    \centering
    \includegraphics[width=0.7\textwidth]{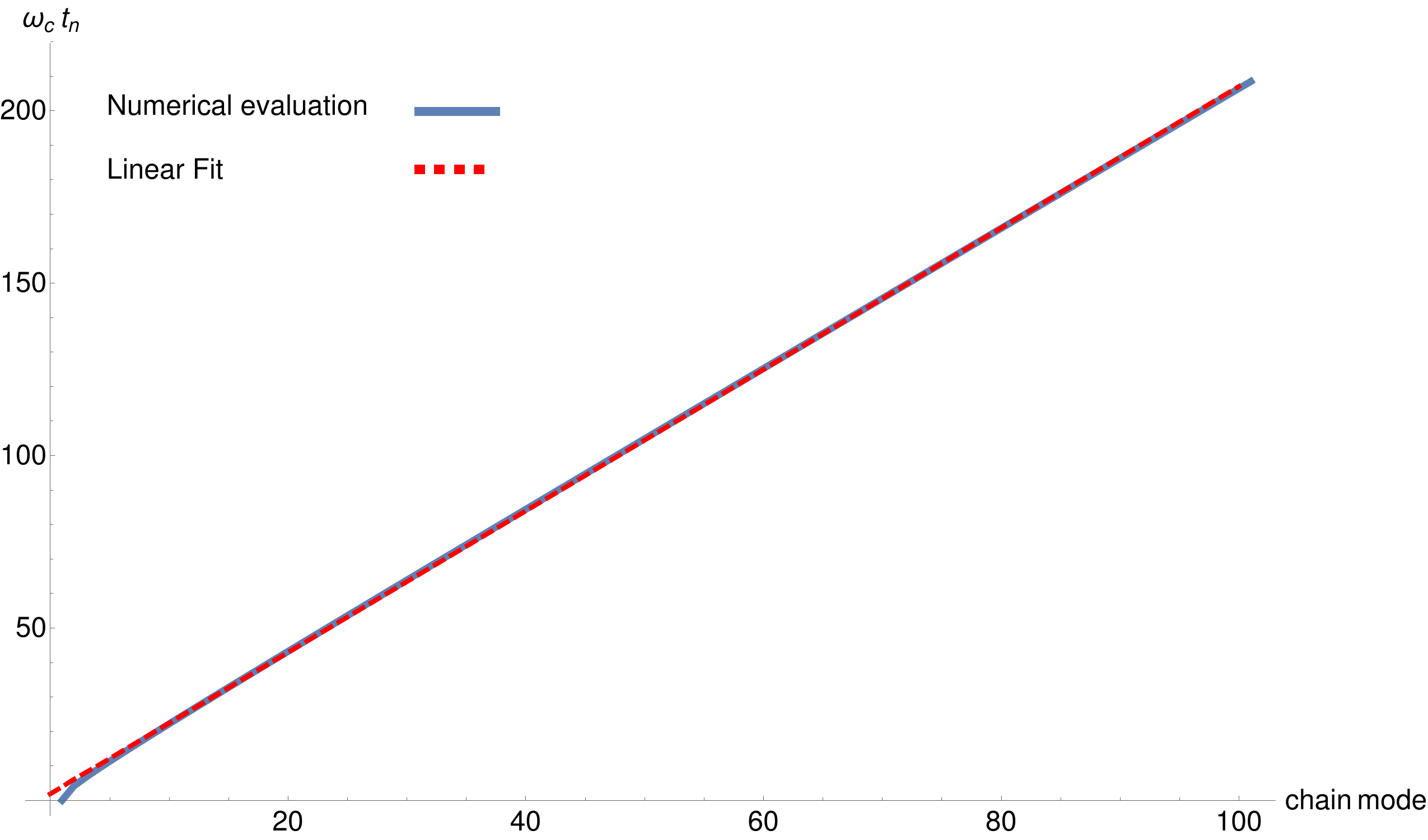}
    \caption{Numerical evaluation of the time $t_n$ such that $\gamma_n(t)$ is maximal against chain modes $n$, and a linear fit of this relation. The slope is of the order of $2$.}
    \label{fig:maxima}
\end{figure}

Figure~\ref{fig:coupling-strength}~(a) shows a heatmap of the time-dependent coupling strength $|\gamma_n(t)|$ computed via numerical integration for a finite $\omega_c$.
We can see that the further away along the chain a mode is (i.e.~the larger $n$ is), the later it will interact with the system.
Figure~\ref{fig:coupling-strength}~(b) shows the behaviour of a selection of time-dependent coupling strength (i.e. vertical cuts of the heatmap) and highlights their maxima.
\begin{figure}
    \centering
    \includegraphics[width=\textwidth]{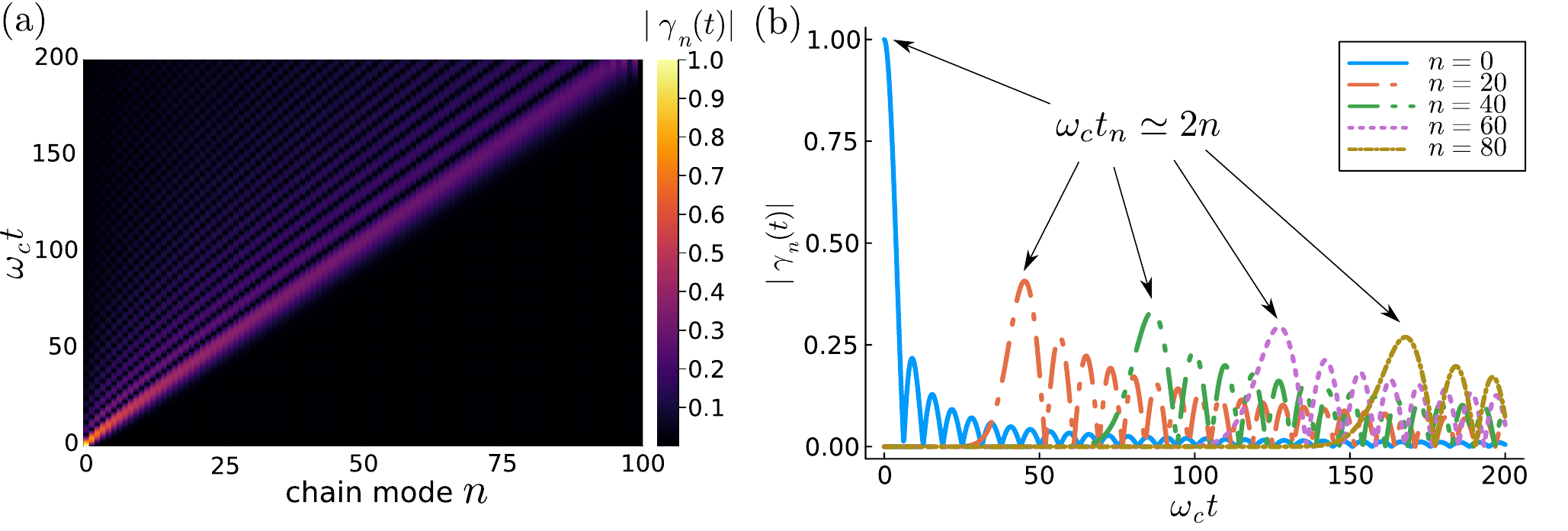}
    \caption{(a) Time-dependent coupling strength $|\gamma_n(t)|$ between the system and the chain modes $n$ for a flat SD. The oscillations in the coupling strength are induced by the finite bandwidth of the environment defined by the cut-off frequency $\omega_c$. (b) Time-dependence of the coupling strength for a subset of modes. In the limit of an infinite bandwidth ($\omega_c\to\infty$) the coupling strength reduces to a Dirac delta $|\gamma_n(t)| \propto \delta(t-t_n)$.}
    \label{fig:coupling-strength}
\end{figure}

\end{document}